\newcommand\pfun{\mathrel{\ooalign{\hfil$\mapstochar\mkern5mu$\hfil\cr$\to$\cr}}}
\let\UrlSpecialsOld\UrlSpecials
\def\UrlSpecials{\UrlSpecialsOld\do\/{\Url@slash}\do\_{\Url@underscore}}%
\def\Url@slash{\@ifnextchar/{\kern-.11em\mathchar47\kern-.2em}%
   {\kern-.0em\mathchar47\kern-.08em\penalty\UrlBigBreakPenalty}}
\def\Url@underscore{\nfss@text{\leavevmode \kern.06em\vbox{\hrule\@width.3em}}}
\newtheorem{theorem}{Theorem}
\newtheorem{lemma}[theorem]{Lemma}
\theoremstyle{nonumberplain}
\newtheorem{proof}{Proof}
\newcommand{\li}{\lstinline[style=Haskell]}
\lstdefinestyle{Haskell}{language=Haskell
        ,columns=flexible
	,basewidth={.365em}
	,keepspaces=True
        ,texcl=true
        ,basicstyle=\sffamily
        ,stringstyle=\itshape
        ,showstringspaces=false
        ,literate={->}{$\,\to\,$}2
                  {<-}{$\,\leftarrow\,$}2
                  {=>}{$\,\Rightarrow\,$}2
                  {→}{$\,\to\,$}2
                  {>>}{{>>}\hspace{-1pt}}2
                  {[]}{[\,]\ }1
                  {++}{{$+\!\!+$\ }}1
                  {\ .\ }{{$\,\circ\,$}}2
	,keywords={type,data,where,let,in,case,of}
        }
\newcommand{\ci}{\lstinline[style=Cmm]}
\lstdefinestyle{Cmm}{language=C
        ,columns=fullflexible
        ,texcl=true
	,commentstyle=\sffamily\itshape
        ,basicstyle=\small\ttfamily
        }
\title{dup -- Explicit un-sharing in Haskell}
\begin{document}
\maketitle

\begin{abstract}
We propose two operations to prevent sharing in Haskell that do not require modifying the data generating code, demonstrate their use and usefulness, and compare them to other approaches to preventing sharing. Our claims are supported by a formal semantics and a prototype implementation.
\end{abstract}


\category{D.1.1}{Programming Techniques}{Applicative (Functional) Programming}
\category{D.3.3}{Programming Languages}{Language Constructs and Features}[Data types and structures]
\category{E.2}{Data storage representation}{}

\keywords space leak, lazy evaluation, sharing, functional programming, natural semantics

\section{Introduction}

Thanks to the immutable nature of data in a pure functional programming language such as Haskell, there are many possibilities for sharing, i.e.\ one object in memory can used in multiple places in the program. In general, this is a good thing, as it can save both execution time (by not calculating the data again) and memory space (by not copying the data).

But there are cases where sharing can hurt, and sometimes hurt badly. A famous example (\citep{wikibook,lazy-evaluators}) is the following function:
\begin{haskell}
let l = [1..100000000]
    f :: [Int] -> Int
    f xs = last xs + head xs
in  f l
\end{haskell}
This program is space-leaky and will quickly run out of memory. If we substitute the term for \li-xs- in the body of \li-f- and evaluate that expression, it runs quickly and in constant memory. We have avoided the sharing of \li-xs- between the calls to \li-last- and \li-head- and the list elements can be garbage collected as soon as they have been consumed by \li-last-. This came at the expense of evaluating the list twice, which is fine, as the list is \emph{large} but \emph{cheap to calculate}.

But this source transformation, as well as other source transformations to avoid sharing (see Section \ref{sec:unit} and \ref{sec:church}), is not always possible or desirable, e.g.\  when the parameter passed to \li-f- comes from library code not under the control of the programmer. Therefore, we propose a new primitive operation \li-dup- which copies a (possibly unevaluated) value on the heap.
\begin{haskell}
data Box a = Box a
dup :: a -> Box a
\end{haskell}
Its value semantics are that of \li!(\x -> Box x)!; the wrapping in \li-Box- just serves the purpose of controlling the exact point of execution of \li-dup- by case-analyzing the \li-Box-. Using \li-dup- allows us to modify in the above example only the code of \li-f- to prevent sharing and achieve constant memory usage:
\begin{haskell}
let f xs = case dup xs of
    	Box xs' -> last xs' + head xs
    l = [1..100000000]
in  f l
\end{haskell}
In Section \ref{sec:unsharing}, we demonstrate the use of \li-dup- and other approaches on the more elaborate example introduced in Section \ref{sec:example}, taking on the programmer’s point of view.

An sharp-witted reader with knowledge of a typical implementation of a Haskell runtime might already have noticed that just copying the object on the heap representing the parameter \li-xs- might not be enough: If, for example, the first cons-cell of \li-xs- is already evaluated, then \li-dup xs- will copy that cell, but the thunk representing the tail of the list will still be shared between \li-xs'- and \li-xs-, and \li-f- will again devour memory. Such things may occur without the programmer’s knowledge, e.g.\ during a compiler optimization pass.

To that end, we propose a variant of \li-dup-, called \li-deepDup-, which effectively copies the complete heap referenced by its argument. This happens -- as one would expect for anything related to Haskell -- lazily: The objects referenced  by the parameter are copied if and when they are needed. In other words: After having evaluated a function which only works on \li-deepDup-’ed copies of its parameters, nothing this evaluation has created on the heap is referenced anymore, unless it is referenced by the function's return value (this is formalized in Theorem \ref{thm:deepdup}).

Our specific contributions are:
\begin{itemize}
\item We introduce primitives that give the programmer the possibility to explicitly prevent sharing.
\item In contrast to approaches based on source transformations, using \li-dup- and \li-deepDup- does \emph{not} require changes to the generating code.
\item We provide precise semantics in the context of Launchbury’s natural semantics for Lazy Evaluation (Section \ref{sec:semantics}) and prove that the recursive variant \li-deepDup- is effective.
\item We show the feasibility of our approach using a proof-of-concept implementation targeting code compiled by an unmodified GHC. (Section \ref{sec:prototype})
\end{itemize}

\section{The running example}
\label{sec:example}

For the remainder of the paper, we will use one running example to demonstrate and discuss the use of \li-dup-. The task at hand, inspired by the minimax algorithm that searches for an optimal strategy in a two-player turn-based game, is to find a path through a (possibly infinite) tree that maximizes some valuation of the nodes. So abstractly, we have a type \li-S- of states, a valuation function \li-value-, an initial state \li-init- and for every state \li-s-, a list of successor states \li-succs s-. For the sake of simplicity of the presentation, we assume this \li-succs s- to be always non-empty (see also Figure~\ref{fig:ex}).

\begin{figure}
\begin{haskell}
-- The problem specification
type S = ...
init :: S
succs :: S -> [S]
value :: S -> Integer

-- The search tree code
data Tree = Node S [Tree]
fstChild :: Tree -> Tree
fstChild (Tree _ (x:xs)) = x

tree :: S -> Tree
tree s = Node s (map tree (succs s))

solve :: Tree -> [S]
solve (Node n ts) = n : solve picked
  where
  rated = [ (t, rate depth t) | t <- ts ]
  picked = fst (maximumBy (comparing snd) rated)
  depth = ...

rate :: Int -> Tree -> Integer
rate 0 (Node s _) = value s
rate d (Node _ ts) = maximum (map (rate (d-1)) ts)

main = do
  let t = tree init
  print $ solve t !! 10000
  doSomethingElseWith t
\end{haskell}
\caption{The running example}
\label{fig:ex}
\end{figure}

Based on these functions, we define a search tree and a solver. The solver picks the successor with the highest rating, whereas the rating is the highest value of nodes at a configurable depth.

Assume a constant number of successors $b$, $b>0$, and that the value of \li-depth- is $d$. Consider what happens when we want to calculate the first 10\,000 elements of the solution: The \li-rate- function will evaluate lots of nodes that will \emph{not} be picked for the solution. But as they are still referenced by the tree \li-t-, the garbage collector cannot get rid of them. So in addition to the  10\,000 interesting nodes, roughly $10\,000\cdot (b-1)\cdot b^{d-1}$ nodes are evaluated that the programmer knows are not required to be kept around. The first row of Figure \ref{fig:evaluation} depicts the heap during this evaluation, with $d=1$ and $b=2$.

\newcommand{\stats}[1]{\ref*{#1}}
More concretely with $d=4$, $b=4$, \li-type S = Word32- and a very cheap \li-succs- and \li-value- functions, this program requires \stats{stats:Original:Shared:mem}~MB of system memory (as reported by the GHC runtime as “total memory in use” when passed the \ci!-s! option) and runs in \stats{stats:Original:Shared:time} seconds.%
\footnote{All statistics are obtained on a machine with 2~GHz and sufficient (32~GB) RAM. The complete code used to generate these statistics is available in the \ci!ghc-dup! repository at \url{http://darcs.nomeata.de/ghc-dup}.} Sharing is indeed the problem here: If we remove the last line of \li-main-, the program runs in \stats{stats:Original:Unshared:mem}~MB of memory and takes \stats{stats:Original:Unshared:time} seconds.

\section{Unsharing the example}
\label{sec:unsharing}

\begin{figure*}
\centering
\input{statstable}
\caption{Time and space performance for $b=4$ and $d=4$}
\label{fig:stats}
\end{figure*}

\newcommand{\thefirstevaluationmatrix}{
\node[right] {\rmfamily original:};
\&
\node[solve] (root) {T};
\draw (root.north) -- +(0mm,2mm);
\&

\&

\node[solve] (root) {N}
child {node [rate] {T}}
child {node  {T}};
\draw (root.north) -- +(0mm,2mm);

\&
\node[solve] (root) {N}
child {node [rate] {N}
	child {node  {N} child {node {T}} child {node {T}}}
	child {node  {N} child {node {T}} child {node {T}}}
	}
child {node  {T}};
\draw (root.north) -- +(0mm,2mm);

\&
\node[solve] (root) {N}
child {node  {N}
	child {node  {N} child {node {T}} child {node {T}}}
	child {node  {N} child {node {T}} child {node {T}}}
	}
child {node[rate]  {N}
	child {node  {N} child {node {T}} child {node {T}}}
	child {node  {N} child {node {T}} child {node {T}}}
	}
;
\draw (root.north) -- +(0mm,2mm);

\&

\node (root) {N}
child {node  {N}
	child {node  {N} child {node {T}} child {node {T}}}
	child {node  {N} child {node {T}} child {node {T}}}
	}
child {node [solve] {N}
	child {node  {N} child {node {T}} child {node {T}}}
	child {node  {N} child {node {T}} child {node {T}}}
	};
\draw (root.north) -- +(0mm,2mm);

\\
\node[right]{\rmfamily \textsf{solveDup}:};
\&
\node[solve] (root) {T};
\draw (root.north) -- +(0mm,2mm);
\&

\node  at (-6mm,0mm) (root) {T};
\draw (root.north) -- +(0mm,2mm);
\node[solve] (root2) {T};
\draw[double] (root) -- (root2);
\&

\node  at (-6mm,0mm) (root) {T};
\draw (root.north) -- +(0,2mm);
\node[solve](root2) {N}
child {node [rate] {T}}
child {node  {T}};
\draw[double] (root) -- (root2);

\&
\node at (-6mm,0mm)  (root) {T};
\draw (root.north) -- +(0,2mm);
\node[solve](root2) {N}
child {node [rate] {N}
	child {node  {N} child {node {T}} child {node {T}}}
	child {node  {N} child {node {T}} child {node {T}}}
	}
child {node  {T}};
\draw (root.north) -- +(0,2mm);
\draw[double] (root) -- (root2);

\&
\node at (-6mm,0mm)  (root) {T};
\draw (root.north) -- +(0,2mm);
\node[solve](root2) {N}
child {node {N}
	child {node  {N} child {node {T}} child {node {T}}}
	child {node  {N} child {node {T}} child {node {T}}}
	}
child {node [rate] {N}
	child {node  {N} child {node {T}} child {node {T}}}
	child {node  {N} child {node {T}} child {node {T}}}
	}
;
\draw (root.north) -- +(0,2mm);
\draw[double] (root) -- (root2);

\&

\node  at (-6mm,0mm) (root) {T};
\draw (root.north) -- +(0,2mm);
\draw [gray] node(root2) {N}
child {node  {N}
	child {node  {N} child {node {T}} child {node {T}}}
	child {node  {N} child {node {T}} child {node {T}}}
	}
child {node[solve,black] {N}
	child[black] {node  {N} child {node {T}} child {node {T}}}
	child[black] {node  {N} child {node {T}} child {node {T}}}
	};
\draw (root.north) -- +(0,2mm);
\draw[gray,double] (root) -- (root2);

\\
\node[right]{\rmfamily \textsf{rateDup}:};
\&
\node[solve] (root) {T};
\draw (root.north) -- +(0mm,2mm);
\&

\node[solve] (root) {N}
child {node [rate] {T}}
child {node  {T}};
\draw (root.north) -- +(0mm,2mm);

\&
\node[solve] (root) {N}
child {node (rate) {T}}
child {node  {T}};
\path (rate) +(6mm,0) node[rate] (rate2) {T};
\draw (root.north) -- +(0mm,2mm);
\draw[double] (rate) -- (rate2);

\&
\node[solve] (root) {N}
child {node (rate) {T}
 +(6mm,0) node[rate] (rate2) {N}
	child {node  {N} child {node {T}} child {node {T}}}
	child {node  {N} child {node {T}} child {node {T}}}
	}
child {node  {T}};
\draw (root.north) -- +(0mm,2mm);
\draw[double] (rate) -- (rate2);

\&
\node[solve] (root) {N}
child {node (rate') {T}
 +(6mm,0) node[gray] (rate'2) {N}
	child[gray] {node  {N} child {node {T}} child {node {T}}}
	child[gray] {node  {N} child {node {T}} child {node {T}}}
	}
child {node (rate) {T}
 +(6mm,0) node[rate] (rate2) {N}
	child {node  {N} child {node {T}} child {node {T}}}
	child {node  {N} child {node {T}} child {node {T}}}
	}
;
\draw (root.north) -- +(0mm,2mm);
\draw[double] (rate) -- (rate2);
\draw[double,gray] (rate') -- (rate'2);

\&

\node (root) {N}
child {node (t1) {T}
 +(6mm,0) node[gray] (t1') {N}
	child[gray] {node  {N} child {node {T}} child {node {T}}}
	child[gray] {node  {N} child {node {T}} child {node {T}}}
	}
child {node [solve] (t2) {T}
 +(6mm,0) node[gray] (t2') {N}
	child[gray] {node  {N} child {node {T}} child {node {T}}}
	child[gray] {node  {N} child {node {T}} child {node {T}}}
	};
\draw (root.north) -- +(0mm,2mm);
\draw[gray,double] (t1) -- (t1');
\draw[gray,double] (t2) -- (t2');
\\
}

\newcommand{\thesecondevaluationmatrix}{
\node[right]{\rmfamily \textsf{solveDup}:};
\&
\node[solve] (root) {N}
child {node  {T}}
child {node  {T}};
\draw (root.north) -- +(0mm,2mm);
\&

\node  at (-6mm,0mm) (root) {N};
\draw (root.north) -- +(0mm,2mm);
\node[solve] (root2) {N}
child {node  {T}}
child {node  {T}};
\draw[double] (root) -- (root2);
\draw (root) -- (root2-1);
\draw (root) -- (root2-2);
\&

\node  at (-6mm,0mm) (root) {N};
\draw (root.north) -- +(0,2mm);
\node[solve](root2) {N}
child {node [rate] {T}}
child {node  {T}};
\draw[double] (root) -- (root2);
\draw (root) -- (root2-1);
\draw (root) -- (root2-2);

\&
\node at (-6mm,0mm)  (root) {N};
\draw (root.north) -- +(0,2mm);
\node[solve](root2) {N}
child {node [rate] {N}
	child {node  {N} child {node {T}} child {node {T}}}
	child {node  {N} child {node {T}} child {node {T}}}
	}
child {node  {T}};
\draw (root.north) -- +(0,2mm);
\draw[double] (root) -- (root2);
\draw (root) -- (root2-1);
\draw (root) -- (root2-2);

\&

\node  at (-6mm,0mm) (root) {N};
\draw (root.north) -- +(0,2mm);
\draw  node[solve] (root2) {N}
child {node  {N}
	child {node  {N} child {node {T}} child {node {T}}}
	child {node  {N} child {node {T}} child {node {T}}}
	} 
child {node[rate] {N}
	child {node  {N} child {node {T}} child {node {T}}}
	child {node  {N} child {node {T}} child {node {T}}}
	};
\draw (root.north) -- +(0,2mm);
\draw[double] (root) -- (root2);
\draw (root) -- (root2-1);
\draw (root) -- (root2-2);

\&

\node  at (-6mm,0mm) (root) {N};
\draw (root.north) -- +(0,2mm);
\draw  node[gray] (root2) {N}
child {node  {N}
	child[black] {node  {N} child {node {T}} child {node {T}}}
	child[black] {node  {N} child {node {T}} child {node {T}}}
	edge from parent[gray]
	} 
child {node[solve,black] {N}
	child[black] {node  {N} child {node {T}} child {node {T}}}
	child[black] {node  {N} child {node {T}} child {node {T}}}
	edge from parent[gray]
	};
\draw (root.north) -- +(0,2mm);
\draw[gray,double] (root) -- (root2);
\draw (root) -- (root2-1);
\draw (root) -- (root2-2);

\\
\node[right]{\rmfamily \textsf{solveDeepDup}:};
\&
\node[solve] (root) {N}
child {node  {T}}
child {node  {T}};
\draw (root.north) -- +(0mm,2mm);
\&

\node  at (-6mm,0mm) (root) {N}
child {node  {T}}
child {node  {T}}
;
\draw (root.north) -- +(0mm,2mm);
\node[solve] (root2) {N}
child {node[rate]  {D}}
child {node  {D}};
\draw[double] (root) -- (root2);
\draw[->] (root2-1) -- (root-1);
\draw[->] (root2-2) -- (root-2);

\&

\node  at (-6mm,0mm) (root) {N}
child {node  {T}}
child {node  {T}}
;
\draw (root.north) -- +(0mm,2mm);
\node[solve] (root2) {N}
child {node[rate]  {T}}
child {node  {D}};
\draw[double] (root) -- (root2);
\draw[double] (root2-1) -- (root-1);
\draw[->] (root2-2) -- (root-2);

\&

\node  at (-6mm,0mm) (root) {N}
child {node  {T}}
child {node  {T}}
;
\draw (root.north) -- +(0mm,2mm);
\node[solve] (root2) {N}
child {node[rate]  {N}
	child {node  {N} child {node {T}} child {node {T}}}
	child {node  {N} child {node {T}} child {node {T}}}
	}
child {node  {D}};
\draw[double] (root) -- (root2);
\draw[double] (root2-1) -- (root-1);
\draw[->] (root2-2) -- (root-2);

\&
\node  at (-6mm,0mm) (root) {N}
child {node  {T}}
child {node  {T}}
;
\draw (root.north) -- +(0mm,2mm);
\node[solve] (root2) {N}
child {node  {N}
	child {node  {N} child {node {T}} child {node {T}}}
	child {node  {N} child {node {T}} child {node {T}}}
	}
child {node[rate]  {N}
	child {
		node  {N} child {node {T}} child {node {T}}
		}
	child {
		node  {N} child {node {T}} child {node {T}}
		}
	edge from parent
};
\draw[double] (root) -- (root2);
\draw[double] (root2-1) -- (root-1);
\draw[double] (root2-2) -- (root-2);

\&
\node  at (-6mm,0mm) (root) {N}
child {node  {T}}
child {node  {T}}
;
\draw (root.north) -- +(0mm,2mm);
\node[gray] (root2) {N}
child[gray] {node  {N}
	child {node  {N} child {node {T}} child {node {T}}}
	child {node  {N} child {node {T}} child {node {T}}}
	}
child {node[solve]  {N}
	child[black] {
		node  {N} child {node {T}} child {node {T}}
		}
	child[black] {
		node  {N} child {node {T}} child {node {T}}
		}
	edge from parent[gray]
};
\draw[double,gray] (root) -- (root2);
\draw[double,gray] (root2-1) -- (root-1);
\draw[double] (root2-2) -- (root-2);

\\
}

\begin{figure*}
\centering
{
\sffamily
\begin{tikzpicture}
[level/.style={sibling distance=20mm/2^#1,level distance=6mm},
every node/.style={inner sep=1pt},
solve/.style={draw,circle,inner sep=1pt},
rate/.style={draw,rectangle,inner sep=2pt},
]

\matrix[column sep={4mm},row sep=4mm, ampersand replacement=\&] (matrix) {
\thefirstevaluationmatrix
\begin{scope}
\path[use as bounding box] (0pt,-5em) -- (0pt,0em);
\node[above right] (caption) {};
\end{scope}
\\
\thesecondevaluationmatrix
};

\path (caption.north west)
node [above right] {
\parbox{88mm}{
\rmfamily
\textsf{T}: thunk, \textsf{N}: node, $=$: dup’ed closure,
 \textcolor{gray}{garbage},
\\
\begin{tikzpicture}[baseline=(n.base)]
\node[solve] (n) {T};
\end{tikzpicture}:
 current argument of solve,
\begin{tikzpicture}[baseline=(n.base)]
\node[rate] (n) {T};
\end{tikzpicture}:
 current argument of rate
\\
}
};

\path (matrix.south west)
node [above right] {
\parbox{88mm}{
\rmfamily
\textsf{D}: \textsf{deepDup} application thunk
}
};
\end{tikzpicture}}\\
\raisebox{23em}[0pt][0pt]{%
\parbox{\linewidth}{
\caption{The heap during original and \textsf{dup}’ed evaluation with $b=2$ and $d=1$}
\label{fig:evaluation}
}}\\
\caption{Comparing \textsf{solveDup} and \textsf{solveDeepDup} applied to a partly evaluated tree with $b=2$ and $d=1$}
\label{fig:deepevaluation}
\end{figure*}

We want to improve the space performance of the program in the example and thus, due to the saved work in the garbage collector, also the runtime performance. In the following, we use \li-dup-, first wrapping the argument of \li-solve-, then the argument of \li-rate-, and \li-deepDup-. We also try two variants that work without new primitives, but require refactoring the generating code. The statistics are collected in Figure \ref{fig:stats}, where all six strategies are applied to
\begin{itemize}
\item an otherwise unreferenced tree, i.e.\ the example code without the last line of the main function,
\item a shared, unevaluated tree as shown in Figure \ref{fig:ex},
\item a shared, unevaluated tree wrapped in another thunk, by passing \li-(fstChild t)- to \li-solve-,
\item a shared tree that has been partly evaluated forcing \li-seq (fstChild t)- before passing \li-t- to the solver,
\item a shared tree that has been fully evaluated by the unmodified solver before,
\item a shared, unevaluated tree that is processed twice by the (possibly modified) solver.
\end{itemize}
In the two variants based on refactoring, the data type used for the tree does not allow for partial or full evaluation, so these runs are omitted.

\subsection{Using dup}

We now modify the example to use our new primitives. There are a few choices in doing so, with different trade-offs. One candidate for \li-dup-’ing is the function \li-solve-: We know that the parameter \li-t- to \li-solve- is an unevaluated expression, and decoupling that from the \li-t- that we pass to \li-doSomethingElseWith- will allow the garbage collector to clean up the tree as \li-solve- proceeds to process it (Figure \ref{fig:evaluation}, second row). So we wrap \li-solve- in \li-solveDup- and use that in \li-main-.
\begin{haskell}
solveDup t = case dup t of Box t' -> solve t'
\end{haskell}
And indeed, we have almost achieved the performance of the original program without sharing: \stats{stats:SolveDup:Shared:mem}~MB and \stats{stats:SolveDup:Shared:time} seconds.

Another candidate for \li-dup-’ing is the function \li-rate-:
As this is the function whose return value is taken into account when deciding whether to pick the argument or not, we know that in most cases, its argument will not be used any more. Therefore, by creating a wrapper \li-rateDup- that duplicates the argument, and using that in \li-solve-, we allow for the argument and all its children to be garbage collected once \li-rate- has finished.
\begin{haskell}
rateDup d t = case dup t of Box t' -> rate d t'
\end{haskell}

Both the runtime and the memory footprint of the program are greatly reduced compared to the original program: It uses \stats{stats:RateDup:Shared:mem}~MB of memory and takes \stats{stats:RateDup:Shared:time} seconds to finish. It is surprising that this even surpasses the speed of the original program without sharing. The reason is that with \li-rate- wrapped in \li-dup-, the first child of the node under inspection of \li-solve- can be freed already when its next child is evaluated by \li-rate- (Figure~\ref{fig:evaluation}, last row, second-to-last column), so the copying garbage collector needs to do even less work.

\begin{figure*}
\centering
\input{statstable-slow}
\caption{Time and space performance for $b=4$ and $d=4$ using an expensive \li-succs- function.}
\label{fig:statsexpensive}
\end{figure*}

\subsection{Using deepDup}
\label{sec:deepdup}
Using \li-dup- is a fragile business and requires the programmer to have a very good idea about what is happening at runtime. It will fail, for example, in two common situations: If the call to \li-solveDup- in \li-main- in Figure \ref{fig:ex} would not just pass the tree \li-t- but rather an expression referencing \li-t-, e.g.
\begin{haskell}
print $ solveDup (fstChild t) !! 1000
\end{haskell}
then \li-dup- will only copy this unevaluated expression, but both copies will reference the same unevaluated expression for~\li-t-, and we are back at the original performance (\stats{stats:SolveDup:SharedThunk:mem}~MB, \stats{stats:SolveDup:SharedThunk:time} seconds).

The same effect occurs if the tree is already partly evaluated. This may even be caused by a compiler transformation, e.g.\ the wrapper/worker transformation, assuming that \li-doSomethingElseWith- is strict  in its argument \citep{unboxed}. Then, the parameter \li-t- is the \li-Node- constructor referencing other nodes or unevaluated trees, and copying the constructor does not help to prevent sharing the referenced data, as shown in the first row of Figure \ref{fig:deepevaluation}.

This is where \li-deepDup- comes in:
Intuitively, \li-deepDup- takes a complete and private copy of the entire heap reachable from its argument, hence preventing any unwanted evaluation outside this copy. In fact this is done lazily: It will just copy the object specified by its parameter, and change all references therein so that before they are evaluated, \li-deepDup- copies them.

So by wrapping \li-solve- in a call to \li-deepDup-:
\begin{haskell}
solveDeepDup t = case deepDup t of Box t' -> solve t'
\end{haskell}
we achieve the performance of a successful run with \li-dup- (\stats{stats:SolveDeepDup:SharedThunk:mem}~MB and \stats{stats:SolveDeepDup:SharedThunk:time} seconds), but also in the cases where \li-t- has already been partly evaluated or is wrapped in another unevaluated expression. The second row of Figure \ref{fig:deepevaluation} shows \li-deepDup- at work.

Using \li-deepDup- is therefore more reliable and easier to handle: The programmer need not have an exact idea of the evaluation state of the arguments when \li-deepDup- is called. And the recursive copying is surprisingly cheap: Even when the tree is already fully evaluated, e.g. by an earlier call to \li-solve t !! 10000-, the runtime stays the same within the precision of the benchmark.


\subsection{The unit type argument pattern}

The problem at hand is, of course, not new, and Haskell programmers have solved it one way or the other before, by rewriting the code to allow more control over sharing.

\label{sec:unit}

A common approach is to replace values that you do not want to be shared by functions, e.g.\ by turning a bound expression \li-let x = e- into a lambda expression \li!let x = \() -> e!. At every point in the program where \li-e- is required, one can get the value of it using \li-x ()-; there will be no sharing between different calls to \li-x ()-

One needs to be careful, though, as some compiler optimizations can introduce unwanted sharing again. The code
\begin{haskell}
xs :: () -> [Int]
xs () = [1..10000000]

main = do
    print (last (xs ()))
    print (head (xs ()))
\end{haskell}
works as expected without optimization. Passing  \ci!-O! to GHC results in sharing again, as a result of the full laziness transformation. In fact, in a discussion of this example on the GHC bug tracker \citep{spaceleakbug}, Claus Reinke suggests an operation like \li-dup- to solve this.%
\footnote{
If, however, the type signature of \li-xs- is not given, then no unwanted sharing happens even with \ci!-O!. The inferred most general type of \li-xs- is polymorphic with type class constraints. This implies that additional parameters are being passed under the hood and they successfully prevent sharing.
}

Applying this pattern to our problem, and aiming for a tree with unshareable subtrees, we can define the following types:
\begin{haskell}
data UTree' = UNode S [UTree]
type UTree = () -> UTree'
\end{haskell}
The required changes to the functions on trees are mechanical and guided by the type checker. The resulting code, when not hit by some optimization-induced re-sharing, shows very good time and space complexity. If sharing is desired at some points of the program, those parts will have to work with the regular \li-Tree- type, possibly leading to a duplication of code.

\subsection{Church encoding}
\label{sec:church}

An alternative is to restructure the program so that the value that must not be shared is not represented using data constructors but rather as a higher-order function \citep{churchenc,olegchurchenc}. This transformation is known as the Church encoding of a data type, or a variant thereof. For the algebraic tree data type in our running example, we would obtain the following type and conversion functions:
\begin{haskell}
type CTree = forall a. (S -> [a] -> a) -> a

toCTree :: Tree -> CTree
toCTree (Node s ts) f = f s $ map (\t -> toCTree t f) ts

fromCTree :: CTree -> Tree
fromCTree ct = ct Node
\end{haskell}

A church-encoded tree corresponding to the value \li-tree s- can be nicely created with the following code:
\begin{haskell}
ctree :: S -> CTree
ctree s f = f s $ map (\s' -> ctree s' f) (succs s)
\end{haskell}

Unfortunately, adapting \li-solve- to this type is a non-trivial task, as the two recursions happening therein (\li-solve- and \li-rate-) need to be folded into one pass:
\begin{haskell}
csolve :: CTree -> [S]
csolve t = fst (t csolve')
  where
  csolve' :: S -> [([S], Int -> Int)] -> ([S], Int -> Int)
  csolve' n rc = 
    ( n : fst (maximumBy (comparing (($ depth) . snd)) rc)
    , \d -> if d == 0 then value n
                      else maximum (map (($ d-1) . snd) rc))
    
\end{haskell}
This additional complexity might make this approach impractical in larger settings.
Note, though, that applying this pattern to the list data type turns a list into its right fold and can enable deforestation \citep{deforestation}.

\pagebreak[3]
\subsection{Comparison and interpretation}

As we can see from the statistics in Figure \ref{fig:stats}, the unit type argument pattern is the clear winner in both runtime and space performance. It is ahead of \li-rateDup- for the same reason that made \li-rateDup- faster than \li-solveDup-: Now even the subtrees in recursive calls of \li-rate- are freed immediately.
Unfortunately, it requires a thorough refactoring of both the data generating and data consuming code; all combinators working on the data type need to be carefully rewritten to preserve the non-sharing behavior of the lifted data type. Also, the full laziness transformation can break the pattern, making it slightly fragile.

The church encoding pattern shows good and predictable memory performance, but exhibits slightly worse runtime behavior. The cases where it is ahead of other approaches it wins only due to the garbage collector overhead induced by unprevented sharing. As the previous pattern, it requires extensive refactoring.

Our primitives come with very small overhead when applied to data that is actually unshared, as we show in the first column. In fact, careful use of \li-dup- can improve performance noticeably even if only small pieces of data can be un-shared and thus freed quickly. While \li-dup- is subtle to use, \li-deepDup- is robust and its effect is more precisely defined, as shown in the next section.

Obviously, avoiding sharing is a bad idea when the result is expensive to create. Figure~\ref{fig:statsexpensive} runs the same benchmark with an expensive (but otherwise equal) \li-succs- function. If the tree needs to be processed twice, then throwing the result away after the first run (as done in the last column) results in a serious loss of run-time performance. Also for the same reason that the \li-rateDup- and unit lifting variants were faster before they now slow down the program, as parts of the tree are evaluated twice.

On the other hand, if the memory footprint becomes larger than the available memory, being able to run the program slowly is still better than not being able to run it at all, so even in this case there can be uses for \li-dup- and \li-deepDup-.

\pagebreak[3]
\section{A natural semantics}
\label{sec:semantics}

To substantiate our claims about the usefulness of \li-dup- and especially \li-deepDup-, we give them a precise meaning within Launchbury’s natural semantics for lazy evaluation \citep{launchbury} and prove that all memory allocated by a function whose arguments are wrapped with \li-deepDup- can be freed after the function has been completely evaluated.

We extend Launchbury’s semantics for normalized lambda calculus with our two primitives:
\newcommand{\mdup}{\text{\textsf{dup}}}
\newcommand{\mdeepDup}{\text{\textsf{deepDup}}}
\newcommand{\sVar}{\text{Var}}
\newcommand{\sExp}{\text{Exp}}
\newcommand{\sHeap}{\text{Heap}}
\newcommand{\sVal}{\text{Val}}
\newcommand{\sValue}{\text{Value}}
\newcommand{\sEnv}{\text{Env}}
\newcommand{\sApp}[2]{\operatorname{#1}#2}
\newcommand{\sLam}[2]{\text{\textlambda} #1.\, #2}
\newcommand{\sDup}[1]{\sApp \mdup #1}
\newcommand{\sDeepDup}[1]{\sApp \mdeepDup #1}
\newcommand{\sLet}[2]{\text{\textsf{let}}\ #1\ \text{\textsf{in}}\ #2}
\newcommand{\sred}[4]{#1 : #2 \Downarrow #3 : #4}
\newcommand{\sRule}[1]{\text{{\textsc{#1}}}}
\newcommand{\fv}[1]{\text{fv}(#1)}
\newcommand{\ufv}[1]{\text{ufv}(#1)}
\newcommand{\ur}[2]{\text{ur}_{#1}(#2)}
\newcommand{\dom}[1]{\text{dom}\,#1}
\newcommand{\fresh}[1]{#1'}
\begin{alignat*}{2}
x,y &\in \sVar
\displaybreak[1]
\\
e &\in
\sExp &&\Coloneqq
\begin{aligned}[t]&
\sLam x e
\mid \sApp e x
\mid x \mid
\\&
\sLet {x_1 = e_1,\ldots,x_n = e_n} e \mid
\\&
\sDup x \mid \sDeepDup x
\end{aligned}
\displaybreak[1]\\
\Gamma, \Delta, \Theta &\in \sHeap &&= \sVar \pfun \sExp
\displaybreak[1]\\
z &\in \sVal &&\Coloneqq \sLam x e
\end{alignat*}
His lambda terms are normalized, i.e.\ all bound variables are distinct and all applications are applications of an expression to a variable.

The set of free variables of an expression $e$ is $\fv e$. Similarly, the set of unguarded free variables $\ufv e$ of an expression $e$, is inductively defined just like $\fv e$ with the exception that $\ufv {\sDeepDup x} = \emptyset$. A value $\hat z$ is $z$ with all bound variables renamed to completely fresh variables.

To avoid having to introduce constructors and case expressions as well we assume $\sDup$ and $\sDeepDup$ to return their result without the wrapping in \li-Box-. This captures the semantics of the Haskell expression\nopagebreak
\begin{haskell}
(\x. let Box y = dup x in y) :: a -> a.
\end{haskell}

In addition to the unmodified reduction rules \sRule{Lam}, \sRule{App}, \sRule{Var} and \sRule{Let}, we add the two rules \sRule{Dup}  and \sRule{Deep} in Figure~\ref{fig:semrules}. The use of $\ufv e$ instead of $\fv e$ in the rule \sRule{DeepDup} is required to avoid a livelock if $\sDeepDup x$ is evaluated while $x$ is itself bound to $\sDeepDup y$.

In the following every heap/term pair $\Gamma : e$ is assumed to be distinctly named, i.e.\ every binding occurring in $\Gamma$ and in $e$ binds a distinct variable; this property is preserved by the reduction rules.

\begin{figure*}
\parskip1cm
\begin{mathpar}
\inferrule
{ }
{\sred{\Gamma}{\sLam xe}{\Gamma}{\sLam xe}}
\sRule{Lam}
\and
\inferrule
{\sred{\Gamma}e{\Delta}{\sLam y e'}\\ \sred{\Delta}{e'[x/y]}{\Theta}{z}}
{\sred\Gamma{\sApp e x}\Theta z}
\sRule{App}
\and
\inferrule
{\sred\Gamma e \Delta z}
{\sred{\Gamma, x\mapsto e} x {\Delta, x\mapsto z}{\hat z}}
\sRule{Var}
\and
\inferrule
{\sred{\Gamma,x_1\mapsto e_1,\ldots,x_n\mapsto e_n} e \Delta z}
{\sred{\Gamma}{\sLet{x_1 = e_1,\ldots, x_n = e_n}e} \Delta z}
\sRule{Let}
\and
\inferrule
{\sred{\Gamma,x\mapsto e, \fresh x\mapsto \hat e} {\fresh x} \Delta z \\ \fresh x \text{ fresh}}
{\sred{\Gamma,x\mapsto e}{\sDup x} \Delta z}
\sRule{Dup}
\and
\inferrule
{
\sred{
\Gamma,
x\mapsto e,
\fresh x\mapsto \hat e[\fresh y_1/y_1,\ldots, \fresh y_n/y_n],
\fresh y_1 \mapsto \sDeepDup y_1,\ldots,
\fresh y_n \mapsto \sDeepDup y_n
} {\fresh x} \Delta z 
\\
\ufv e = \{y_1,\ldots,y_n\}
\\
\fresh x,\ \fresh y_1,\ldots,y_n \text{ fresh}
}
{\sred{\Gamma,x\mapsto e}{\sDeepDup x} \Delta z}
\sRule{Deep}
\end{mathpar}
\caption{Natural semantics extended for \li-dup- and \li-deepDup-}
\label{fig:semrules}
\end{figure*}

\newcommand{\dsem}[2]{\llbracket #1 \rrbracket_{#2}}
\newcommand{\esem}[1]{\{\!\!\{#1\}\!\!\}}
\newcommand{\case}[1]{\par\vspace{\standardvspace}\noindent\textbf{Case:} #1\nopagebreak\par\noindent\ignorespaces}

Besides the natural semantics, Launchbury also defines a denotational semantics. He models values as a lifted function space, denoted \sValue, and environments
\[
\rho \in \sEnv = \sVar \to \sValue
\]
as functions from variables into values. He writes $\rho \le \rho'$ if $\rho'$ extends $\rho$, i.e. they differ only for variables where $\rho$ is bottom. The expression $\dsem e \rho$ is the value of the expression $e$ in the environment $\rho$.

The semantics of a heap $\Gamma$ is given by $\esem \Gamma \rho$, which is the environment $\rho$ updated by the values specified in the heap. This is defined as a fixed point, as the heap may contain recursive references:
\begin{multline*}
\esem{ x_1\mapsto e_1,\ldots,x_n\mapsto e_n}\rho \\
= \mu \rho'. \rho \sqcup (x_1 \mapsto \dsem{e_1}{\rho'}) \sqcup \cdots \sqcup (x_n \mapsto \dsem{e_n}{\rho'})
\end{multline*}
This definition makes sense on environments $\rho$ that are consistent with $\Gamma$, i.e.\ if $\rho$ and $\Gamma$ bind the same variable, then they are bound to values for which an upper bound exists.

Launchbury proves his natural semantics to be correct with respect to the denotational semantics. 
Naturally, we want to preserve this property. Our new primitives should be invisible to the denotational semantics, hence we extend the semantics function as follows:
\begin{align*}
\dsem{\sDup x}\rho &\coloneqq \dsem{x}\rho \\
\dsem{\sDeepDup x}\rho &\coloneqq \dsem{x}\rho.
\end{align*}

\begin{theorem}[Theorem 2 from \citep{launchbury}]
If $\sred\Gamma e \Delta z$, then for all environments $\rho$,
\[
\dsem e {\esem \Gamma \rho} = \dsem z {\esem \Delta \rho}
\text{ and }
\esem \Gamma \rho \le \esem \Delta \rho.
\]
\end{theorem}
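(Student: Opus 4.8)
The plan is to extend Launchbury's correctness proof by rule induction on the derivation of $\sred\Gamma e \Delta z$. The rules \sRule{Lam}, \sRule{App}, \sRule{Var} and \sRule{Let} are unchanged, so those cases go through exactly as in \citep{launchbury} (the \sRule{App} case still uses only the substitution property of $\dsem\cdot\cdot$ and monotonicity of $\esem\cdot\rho$). All the real work is in \sRule{Dup} and \sRule{Deep}, and there I rely on just two facts: the defining equations $\dsem{\sDup x}\rho = \dsem{\sDeepDup x}\rho = \dsem x\rho$, and the invariance of denotations under the all-fresh-bound-variable renaming, $\dsem{\hat e}\rho = \dsem e\rho$ for all $\rho$.

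The heart of both new cases is an auxiliary lemma: enlarging a heap by fresh bindings that are semantic copies of existing ones changes nothing observable. Precisely, let $\Gamma_0$ be a distinctly named heap and $\rho$ consistent with $\Gamma_0$ and not mentioning the fresh variables $\fresh x,\fresh{y_1},\ldots,\fresh{y_n}$ (always arrangeable by $\alpha$-renaming); let $f,g_1,\ldots,g_n$ be expressions and $v_0,v_1,\ldots,v_n\in\dom{\Gamma_0}$ such that $\dsem f\sigma = \sigma(v_0)$ and $\dsem{g_j}\sigma = \sigma(v_j)$ for every $\sigma$ that agrees with $\esem{\Gamma_0}\rho$ on $\dom{\Gamma_0}$ and has $\sigma(\fresh{y_j}) = \sigma(y_j)$ for all $j$. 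Then, writing $\Gamma_1 = \Gamma_0,\fresh x\mapsto f,\fresh{y_1}\mapsto g_1,\ldots,\fresh{y_n}\mapsto g_n$, the environment $\esem{\Gamma_1}\rho$ agrees with $\esem{\Gamma_0}\rho$ on $\dom{\Gamma_0}$, sends $\fresh x$ to $\esem{\Gamma_0}\rho(v_0)$ and each $\fresh{y_j}$ to $\esem{\Gamma_0}\rho(v_j)$; in particular $\esem{\Gamma_0}\rho \le \esem{\Gamma_1}\rho$. The proof is the familiar two-sided least-fixed-point argument: the extension of $\esem{\Gamma_0}\rho$ by the intended values on the fresh variables is a fixed point of the functional defining $\esem{\Gamma_1}\rho$, so $\esem{\Gamma_1}\rho$ is below it; the restriction of $\esem{\Gamma_1}\rho$ to $\dom{\Gamma_0}$ is a fixed point of the functional defining $\esem{\Gamma_0}\rho$, so $\esem{\Gamma_0}\rho$ is below that restriction; combining the two inequalities pins down $\esem{\Gamma_1}\rho$ on $\dom{\Gamma_0}$, and one further unfolding fixes its values on the fresh variables.

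With the lemma in hand the \sRule{Dup} case is immediate: its premise is $\sred{\Gamma,x\mapsto e,\fresh x\mapsto\hat e}{\fresh x}\Delta z$, so apply the lemma with $\Gamma_0 = \Gamma,x\mapsto e$, $n = 0$, $f = \hat e$, $v_0 = x$; the hypothesis $\dsem{\hat e}\sigma = \dsem e\sigma = \esem{\Gamma_0}\rho(x) = \sigma(x)$ holds by renaming-invariance, the fact that $\sigma$ and $\esem{\Gamma_0}\rho$ agree on the variables occurring in $e$ (none of which are fresh), and the fixed-point unfolding of $\esem{\Gamma_0}\rho$ at $x$, exactly as in Launchbury's \sRule{Var} case. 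The lemma then gives $\esem{\Gamma,x\mapsto e}\rho \le \esem{\Gamma_1}\rho$ and $\dsem{\fresh x}{\esem{\Gamma_1}\rho} = \esem{\Gamma_0}\rho(x) = \dsem{\sDup x}{\esem{\Gamma,x\mapsto e}\rho}$, and the induction hypothesis on the premise closes both conjuncts. The \sRule{Deep} case is the same with $\Gamma_0 = \Gamma,x\mapsto e$, $f = \hat e[\fresh{y_1}/y_1,\ldots,\fresh{y_n}/y_n]$, $g_j = \sDeepDup{y_j}$, $v_0 = x$, $v_j = y_j$: the $g_j$-hypothesis is just $\dsem{\sDeepDup{y_j}}\sigma = \dsem{y_j}\sigma = \sigma(y_j)$, and the $f$-hypothesis follows by using the substitution property to turn $\dsem{\hat e[\fresh{y_1}/y_1,\ldots,\fresh{y_n}/y_n]}\sigma$ into $\dsem{\hat e}{\sigma'}$, where $\sigma'$ differs from $\sigma$ at most by setting each $y_j$ to $\sigma(\fresh{y_j}) = \sigma(y_j)$ — hence agrees with $\sigma$ on the variables of $\hat e$ — and then proceeding as in the \sRule{Dup} case. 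Notice this never inspects $\ufv e$ rather than $\fv e$: the free variables of $e$ untouched by the substitution already carry matching values, so the $\ufv e$/$\fv e$ distinction in rule \sRule{Deep} is needed only for operational termination and plays no role here.

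The step I expect to be the main obstacle is formulating and proving the auxiliary fixed-point lemma cleanly: it is the one place where two least fixed points over heaps of different size must be compared and shown to agree on the common domain, all while respecting Launchbury's consistency conditions on $\rho$ and keeping in mind that $\hat e$ — and, in \sRule{Deep}, $\hat e[\fresh{y_1}/y_1,\ldots,\fresh{y_n}/y_n]$ — equals $e$ only up to renaming and substitution. Everything after that lemma is a handful of lines, and the untouched cases are inherited verbatim from \citep{launchbury}.
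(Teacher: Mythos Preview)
Your proposal is correct and follows the same overall strategy as the paper: induction on the derivation, with only the two new cases \sRule{Dup} and \sRule{Deep} requiring work, and both cases reduced to the fact that extending a heap by fresh bindings which are semantic copies of existing ones does not change the environment on the old domain.

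The difference is one of packaging and rigour. The paper does not isolate an auxiliary lemma; it simply writes the chain of equalities inline and justifies the key step $\dsem{\hat e}{\esem{\Gamma, x\mapsto e}\rho} = \dsem{\hat e}{\esem{\Gamma, x\mapsto e, \fresh x \mapsto \hat e}\rho}$ by the single annotation ``$\fresh x$ fresh'', and similarly asserts $\esem{\Gamma, x\mapsto e}\rho \le \esem{\Gamma, x\mapsto e, \fresh x\mapsto \hat e}\rho$ without further comment. For \sRule{Deep} it first observes $\dsem{\fresh y_i}{\esem{\Gamma'}\rho} = \dsem{y_i}{\esem{\Gamma, x\mapsto e}\rho}$ and then chains equalities as before. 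Your version makes the underlying two-sided least-fixed-point comparison explicit and states it once for both cases, which is cleaner and actually discharges the step the paper treats as folklore. What the paper's presentation buys is brevity; what yours buys is that the one genuinely nontrivial point (comparing fixed points over heaps of different size) is visible rather than hidden behind ``fresh''. Your closing remark that the $\ufv e$ versus $\fv e$ distinction is irrelevant for denotational correctness is also accurate and not made explicit in the paper.
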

\begin{proof}
The proof in \citep{launchbury} is by induction on the derivation; we only have to give it for the two new cases corresponding to the rules \sRule{Dup} and \sRule{Deep}. We assume that the fresh variables in the rules are chosen to be undefined in $\rho$:

\case{$\sDup x$}
By induction, we know (i) $\dsem{\fresh x}{\esem{\Gamma, x \mapsto e, \fresh x \mapsto \hat e} \rho} = \dsem{z}{\esem \Delta \rho}$ and (ii) $\esem{\Gamma, x \mapsto e, \fresh x \mapsto \hat e} \rho \le \esem \Delta \rho$.

For the first part, we have 
\begin{align*}
&\phantom{{}={}}\dsem{\sDup x}{\esem{\Gamma, x\mapsto e}\rho}\\
&= \dsem{x}{\esem{\Gamma, x\mapsto e}\rho}\\
&= \dsem{e}{\esem{\Gamma, x\mapsto e}\rho}\\
&= \dsem{\hat e}{\esem{\Gamma, x\mapsto e}\rho}\\
&= \dsem{\hat e}{\esem{\Gamma, x\mapsto e, \fresh x \mapsto \hat e}\rho} && \text{$\fresh x$ fresh}\\
&= \dsem{\fresh x}{\esem{\Gamma, x\mapsto e, \fresh x \mapsto \hat e}\rho} \\
&= \dsem{z}{\esem \Delta \rho} &&\text{by (i)}
\end{align*}
as desired.

The second part follows from (ii) and from $\fresh x$ being fresh:
\begin{align*}
\esem{\Gamma, x\mapsto e}\rho \le \esem{\Gamma, x\mapsto e, \fresh x\mapsto \hat e} \rho \le \esem{\Delta}\rho
\end{align*}

\case{$\sDeepDup x$}
Let \mbox{$\Gamma'$} denote the heap in the assumption of the rule, i.e. $\Gamma, x\mapsto e, \fresh x\mapsto \hat e[\fresh y_1/y_1,\ldots,\fresh y_n/y_n],\allowbreak \fresh y_1 \mapsto \sDeepDup {y_1},\ldots,\allowbreak\fresh y_n \mapsto \sDeepDup {y_n}$.
By induction, we know (i) $\dsem{\fresh x}{\esem{\Gamma'} \rho} = \dsem{z}{\esem \Delta \rho}$ and (ii) $\esem{\Gamma'} \rho \le \esem \Delta \rho$.

The newly introduced variables $\fresh y_i$, $i=1,\ldots,n$, have the same semantics as their original counterparts:
\[
\dsem{\fresh y_i}{\esem{\Gamma'}\rho}
= \dsem{\sDeepDup {y_i}}{\esem{\Gamma'}\rho}
= \dsem{y_i}{\esem{\Gamma'}\rho}
= \dsem{y_i}{\esem{\Gamma, x\mapsto e}\rho}.
\]
This implies (iii) $\dsem{\hat e[\fresh y_1/y_1,\ldots,\fresh y_n/y_n]}{\esem{\Gamma'}\rho} = \dsem{e}{\esem{\Gamma, x\mapsto e}\rho}$. Hence
{\allowdisplaybreaks[1]
\begin{align*}
&\phantom{{}={}}\dsem{\sDeepDup x}{\esem{\Gamma, x\mapsto e}\rho}\\
&= \dsem{x}{\esem{\Gamma, x\mapsto e}\rho}\\
&= \dsem{e}{\esem{\Gamma, x\mapsto e}\rho}\\
&= \dsem{\hat e[\fresh y_1/y_1,\ldots,\fresh y_n/y_n]}{\esem{\Gamma'}\rho} &&\text{by (iii)}\\
&= \dsem{\fresh x}{\esem{\Gamma'}\rho}\\
&= \dsem{z}{\esem \Delta \rho} &&\text{by (i)}
\end{align*}
}
and, by (ii),
\begin{align*}
\esem{\Gamma, x\mapsto e}\rho \le \esem{\Gamma'} \rho \le \esem{\Delta}\rho.
\end{align*}
\end{proof}

More interesting than the semantic correctness of our additional rules is what properties of \li-deepDup- we can prove with them. Following our intuition from the introduction, we formulate the next theorem, where $\Gamma \subseteq \Delta$ means that $\Gamma$ and $\Delta$ agree on the domain of $\Gamma$ and only new variables are bound.

\begin{theorem}
Consider the expression
\[
e=\sLet{x_1' = \sDeepDup x_1,\ldots,x_n'= \sDeepDup x_n} e'
\]
with $\fv{e'} \subseteq \{x_1',\ldots,x_n'\}$. If $\sred \Gamma e \Delta z$ and $z$ is a closed value (i.e.\ $\fv{z} = \emptyset$), then $\Gamma \subseteq \Delta$.
\label{thm:deepdup}
\end{theorem}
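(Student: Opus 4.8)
The plan is to prove a stronger claim by rule induction on the derivation of $\sred\Gamma e \Delta z$, carrying along a reachability invariant that records exactly which variables the evaluation is permitted to force. A single application of \sRule{Let} reduces the theorem to the following: if $\sred{\Gamma_0}{e'}{\Delta}{z}$ where $\Gamma_0 = \Gamma, \fresh x_1\mapsto\sDeepDup{x_1},\ldots,\fresh x_n\mapsto\sDeepDup{x_n}$, $\fv{e'}\subseteq\{\fresh x_1,\ldots,\fresh x_n\}$ and the $\fresh x_i$ are fresh, then $\Gamma\subseteq\Delta$. Because $\ufv{\sDeepDup{x_i}}=\emptyset$, the variables reachable from $e'$ through $\Gamma_0$ without passing a $\sDeepDup$ are just the $\fresh x_i$ themselves: writing $\ur{\Sigma}{S}$ for the least superset of $S$ that contains $\ufv{\Sigma(v)}$ for each of its members $v\in\dom\Sigma$, we have $\ur{\Gamma_0}{\ufv{e'}}\subseteq\{\fresh x_1,\ldots,\fresh x_n\}$, which is disjoint from $\dom\Gamma$. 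This is the property I would carry through the recursion.

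Fixing $\Gamma$, call a configuration $\Sigma : e$ \emph{$\Gamma$-safe} when $\Sigma\supseteq\Gamma$, the heaps $\Sigma$ and $\Gamma$ agree on $\dom\Gamma$, and $\ur{\Sigma}{\ufv e}\cap\dom\Gamma=\emptyset$. The key lemma would be: if $\sred\Sigma e \Lambda w$ and $\Sigma : e$ is $\Gamma$-safe, then (i) $\Lambda\supseteq\Gamma$ and $\Lambda$ and $\Gamma$ agree on $\dom\Gamma$; (ii) $\ur{\Lambda}{W}\cap\dom\Gamma=\emptyset$ for every $W\subseteq\dom\Sigma$ with $\ur{\Sigma}{W}\cap\dom\Gamma=\emptyset$; and (iii) $\ur{\Lambda}{\ufv w}\cap\dom\Gamma=\emptyset$. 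Clause (i) instantiated at $\Gamma_0 : e'$ is precisely $\Gamma\subseteq\Delta$; clauses (ii) and (iii) are the extra cargo that makes the induction self-supporting.

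The induction is a case analysis on the last rule. \sRule{Lam} is immediate. For \sRule{Var} on $v$, safety forces $v\notin\dom\Gamma$, so the recursive call on $v$'s binding stays $\Gamma$-safe (its unguarded free variables lie in $\ur{\Sigma}{\{v\}}$) and re-installing $v\mapsto z$ does not disturb $\dom\Gamma$. \sRule{Let} follows because the freshly bound variables avoid $\dom\Gamma$ and add only their own, already safe, free variables to the closure. The crux is \sRule{Dup} and \sRule{Deep}: there the new binding $\fresh x\mapsto\hat e$ (respectively $\fresh x\mapsto\hat e[\fresh y_1/y_1,\ldots,\fresh y_n/y_n]$) satisfies $\ufv{\hat e}=\ufv e$ because renaming bound variables preserves free variables, and after the substitution in \sRule{Deep} the unguarded free variables of the copy are exactly the fresh names $\fresh y_1,\ldots,\fresh y_n$; together with the fresh thunks $\fresh y_i\mapsto\sDeepDup{y_i}$, which again contribute nothing to the unguarded closure, the configuration $\fresh x$ is $\Gamma$-safe even when $x\in\dom\Gamma$ --- which is the entire point, since \sRule{Deep} reads $x$'s binding but never forces $x$. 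Finally, for \sRule{App}: the first premise yields by (iii) that the returned $\sLam y{e_1}$ satisfies $\ur{\Delta}{\ufv{\sLam y{e_1}}}\cap\dom\Gamma=\emptyset$, and (ii) applied to $W=\{x\}$ keeps the argument $x$ safe in the new heap $\Delta$; since $\ufv{e_1[x/y]}\subseteq(\ufv{e_1}\setminus\{y\})\cup\{x\}$, the second premise is $\Gamma$-safe, and the three clauses compose along the two premises.

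The main obstacle is getting the invariant exactly right so that it is genuinely preserved. The naive version, ``$\ufv e$ avoids $\dom\Gamma$'', is not closed under the rules, since forcing a variable whose binding unguardedly mentions a further variable can cascade into $\dom\Gamma$; this is what forces the heap-relative closure $\ur{\cdot}{\cdot}$, and, in the \sRule{App} case, the monotonicity clause (ii), which is needed to transport the safety of the application's argument from the function's starting heap to its result heap. The rest is bookkeeping around $\alpha$-renaming and the simultaneous substitution in \sRule{Deep}: one must check that the unguarded free variables of the copy are precisely the $\fresh y_i$, and that any variable of $\dom\Gamma$ occurring \emph{guarded} in $e$ --- hence surviving unrenamed in the copy --- is harmless exactly because it sits under a $\sDeepDup$ and can therefore only be read, never forced. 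The hypothesis $\fv z=\emptyset$ trivialises clause (iii) and, with (i), delivers the sharper reading from the introduction: every variable allocated during the evaluation, i.e.\ every element of $\dom\Delta\setminus\dom\Gamma$, is garbage --- reached neither from $z$ nor from $\Gamma$.
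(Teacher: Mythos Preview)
Your proposal is correct and takes essentially the same approach as the paper: the paper's Lemma~3 carries the same three-part invariant through a rule induction on the derivation, using the same unguarded-reachability closure (your clauses (i), (ii), (iii) are the paper's (a), (c), (b), and your set-indexed $\ur{\Sigma}{S}$ is the paper's expression-indexed $\ur{\Gamma}{e}$ instantiated at $\ufv e$). The only cosmetic difference is that the paper applies its lemma directly to the \textsf{let}-expression---whose unguarded free variables are already empty---rather than peeling off one \sRule{Let} step first.
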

This implies that any value on the heap $\Delta$ that was created during the evaluation of $e$ can be freed afterwards.

The theorem is an immediate consequence of statement (a) of the following Lemma~\ref{lem:deepdup}, with $\Gamma_0 = \Gamma$. We will need the notion of the unguarded reachable set $\ur{\Gamma}{e}$ of an expression $e$ in a context $\Gamma$, which is mutually defined for all expressions as the smallest sets which fulfill the equation
\[
\ur{\Gamma}{e} = \ufv e \cup \textstyle\bigcup_{x\in \ufv e}\ur{\Gamma}{\Gamma\ x}.
\]
Note that $\ufv e \subseteq \ufv {e'}$ implies $\ur\Gamma e \subseteq \ur\Gamma {e'}$. 

\begin{lemma}
\label{lem:deepdup}
Let $\Gamma_0$ be a heap and $U= \dom\Gamma_0$ its domain. If $\sred\Gamma e \Delta z$, $\Gamma_0 \subseteq \Gamma$ and $U \cap \ur \Gamma e = \emptyset$, then
\begin{enumerate}[(a)]
\item $\Gamma_0 \subseteq \Delta$,
\item $U \cap \ur \Delta z=\emptyset$ and
\item \label{lem:deepdup:3} $U \cap \ur\Gamma y = \emptyset$ implies $U\cap \ur\Delta y = \emptyset$ for $y\in \dom\Gamma$.
\end{enumerate}
\end{lemma}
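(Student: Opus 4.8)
The plan is to prove Lemma~\ref{lem:deepdup} by induction on the derivation of $\sred\Gamma e\Delta z$, establishing the three statements (a), (b) and (c) simultaneously; statement (c) is an invariant that exists precisely so that (a) and (b) survive the recursive calls in the \sRule{App} and \sRule{Var} rules. Before the case analysis I would collect three routine facts about the unguarded reachable set: that it is monotone in the expression (already noted in the text), that $\ur\Gamma{e[x/y]} \subseteq \ur\Gamma e \cup \ur\Gamma x$, and that extending a heap by bindings for variables that are not already reachable does not enlarge the reachable set of any old expression. This last fact is what makes the fresh bindings introduced by \sRule{Let}, \sRule{Dup} and \sRule{Deep} harmless and what guarantees they never bring a variable of $U$ into reach.

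The cases \sRule{Lam} and \sRule{Let} are then immediate, the latter because the new bindings $x_i \mapsto e_i$ are fresh, so $U \cap \ur{\Gamma,x_1\mapsto e_1,\ldots,x_n\mapsto e_n}{e} = \emptyset$ and the induction hypothesis transfers directly. For \sRule{Var}, from $U \cap \ur{\Gamma,x\mapsto e}{x} = \emptyset$ one extracts $x \notin U$ and $U \cap \ur{\Gamma,x\mapsto e}{e} = \emptyset$; since $x \notin U$ we already have $\Gamma_0 \subseteq \Gamma$, and since dropping the binding $x\mapsto e$ can only shrink the reachable set, $U \cap \ur\Gamma e = \emptyset$, so the hypothesis applies to the premise. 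Putting the updated binding $x\mapsto z$ back into $\Delta$ is harmless because $x \notin U$ and, by part (b), $U \cap \ur\Delta z = \emptyset$; this also yields part (c) for $y = x$, while for the other $y$ it is inherited unchanged.

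The case \sRule{App} is where (c) earns its keep. Since $\ufv{\sApp e x} = \ufv e \cup \{x\}$, the premise $U \cap \ur\Gamma{\sApp e x} = \emptyset$ splits into $U \cap \ur\Gamma e = \emptyset$ and $U \cap \ur\Gamma x = \emptyset$. Applying the induction hypothesis to $\sred\Gamma e\Delta{\sLam y {e'}}$ gives $\Gamma_0 \subseteq \Delta$, $U \cap \ur\Delta{e'} = \emptyset$ (the bound variable $y$ plays no role), and — via part (c) — that $x$, which did not reach $U$ in $\Gamma$, still does not in $\Delta$. The substitution estimate then gives $U \cap \ur\Delta{e'[x/y]} = \emptyset$, so the hypothesis applies to the second premise $\sred\Delta{e'[x/y]}\Theta z$, and its three conclusions are exactly (a), (b) and (c) for the whole derivation, with (c) obtained by composing the two instances of (c) along the chain $\dom\Gamma \subseteq \dom\Delta \subseteq \dom\Theta$.

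Finally, \sRule{Deep} is the conceptual point but, pleasantly, not the hardest calculation: because $\ufv{\sDeepDup x} = \emptyset$, the hypothesis $U \cap \ur{\Gamma,x\mapsto e}{\sDeepDup x} = \emptyset$ is vacuous, yet the conclusion still holds, since every re-entry point $y_i$ into the old heap is replaced in the copy by a \emph{fresh} binding $\fresh{y_i} \mapsto \sDeepDup{y_i}$, which again has no unguarded free variables. Concretely, writing $\Gamma'$ for the heap in the rule's premise, one computes $\ur{\Gamma'}{\fresh x} = \{\fresh x, \fresh{y_1}, \ldots, \fresh{y_n}\}$, a set of fresh variables disjoint from $U$; the induction hypothesis applied to $\sred{\Gamma'}{\fresh x}\Delta z$ then delivers (a), (b) and (c), and the old portion of the heap is untouched because only fresh bindings were added. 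I expect the main work to be not in any single case but in pinning down the three auxiliary $\ur$-facts carefully — in particular the interaction of $\ur$ with substitution and with the black-holing in \sRule{Var}, where the binding under evaluation is temporarily absent from the heap — and in checking that those estimates are tight enough to close the \sRule{App} case.
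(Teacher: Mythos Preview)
Your proposal is correct and follows essentially the same approach as the paper's proof: induction on the derivation, the same case analysis, the same role for statement~(c) as the invariant that carries the disjointness of $U$ with $\ur{\cdot}{x}$ across the first premise of \sRule{App}, and the same computation $\ur{\Gamma'}{\fresh x} = \{\fresh x,\fresh y_1,\ldots,\fresh y_n\}$ in the \sRule{Deep} case. The only omission is that you do not spell out the \sRule{Dup} case separately; it is indeed covered by your ``fresh bindings are harmless'' principle together with the observation that $\ur{\Gamma,x\mapsto e,\fresh x\mapsto \hat e}{\fresh x} \subseteq \ur{\Gamma,x\mapsto e}{\sDup x} \cup \{\fresh x\}$, which the paper makes explicit.
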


\begin{proof} 
The proof is by induction on the structure of the derivation $\sred\Gamma e \Delta z$.
\case{$\sLam xe$}
Immediate.

\case{$\sApp e x$}
From $\ur\Gamma{\sApp e x} = \ur\Gamma{e} \cup \ur\Gamma{x}$ and the assumption $U \cap \ur\Gamma{\sApp e x}= \emptyset$ we have $U \cap \ur\Gamma e = \emptyset$ and $U \cap \ur\Gamma x = \emptyset$. From the first induction hypothesis we obtain (i) $\Gamma_0 \subseteq \Delta$, (ii) $U \cap \ur\Delta {\sLam y e'}= \emptyset$ and (iii) $U \cap \ur\Delta{x} = \emptyset$.

As $\ur\Delta{e'[x/y]} \subseteq \ur\Delta{\sLam ye'} \cup \ur\Delta{x}$, (ii) and (iii) imply $U \cap \ur\Delta{e'[x/y]} = \emptyset$. With (i) we obtain (a) $\Gamma_0\subseteq\Theta$ and (b) $U\cap \ur\Theta z = \emptyset$ from the second induction hypothesis.

Statement (c) follows immediately from the induction hypothesizes.

\case{$x$}
Removing a variable from a heap does not increase unreachable sets, so $\ur\Gamma e \subseteq \ur{\Gamma, x\mapsto e}e \subseteq \ur{\Gamma,x\mapsto e} x$. From $x \in \ur{\Gamma,x \mapsto e} x$ and the assumption $U \cap \ur{\Gamma,x\mapsto e}x = \emptyset$ we have $x \notin U$, thus $\Gamma_0 \subseteq \Gamma$, and $U \cap \ur\Gamma e = \emptyset$. From the induction hypothesis we now obtain $\Gamma_0 \subseteq \Delta$ and $U \cap \ur\Delta z = \emptyset$. As $\Delta \subseteq (\Delta, x \mapsto z)$, $\ufv{z} = \ufv{\hat z}$ and $\ur \Delta z = \ur {\Delta, x \mapsto z} {\hat z}$, the statements (a) $\Gamma_0\subseteq(\Delta,x\mapsto z)$ and (b) $ U\cap \ur{\Delta, x\mapsto z}{\hat z} = \emptyset$ follow.

Let $y\in \dom\Gamma_0$ with $U\cap \ur{\Gamma,x\mapsto e}y=\emptyset$. As $\ur\Gamma y \subseteq \ur{\Gamma,x\mapsto e} y$ we have $U\cap\ur\Gamma y=\emptyset$ and hence $U\cap\ur{\Delta}y=\emptyset$ from the induction hypothesis. This and (b) imply (c), as $\ur{\Delta,x\mapsto z}y \subseteq \ur{\Delta}y \cup \ur{\Delta}z$.

\case{$\sLet {x_1=e_1,\ldots,x_n=e_n}e$}
For brevity, let $\Gamma' = \Gamma,x_1\mapsto e_1,\ldots,x_n\mapsto e_n$ and $e_{l} = \sLet {x_1=e_1,\ldots,x_n=e_n}e$.
Clearly $\Gamma_0 \subseteq \Gamma \subseteq \Gamma'$.
Also, for each $e_* \in \{e,e_1,\ldots,e_n\}$ we have $\ufv{e_*} \subseteq \ufv{e_l} \cup \{x_1,\ldots,x_n\}$. This implies 
{\allowdisplaybreaks[1]
\begin{align*}
\ur{\Gamma'}e
&= \ufv e \cup \textstyle\bigcup_{x\in \ufv e}\ur{\Gamma'}{\Gamma'\ x} \\
&\subseteq 
\begin{aligned}[t]
\ufv {e_l} &\cup \{x_1,\ldots,x_n\}\\
&\cup \textstyle\bigcup_{x\in \ufv {e_l}}\ur{\Gamma'}{\Gamma'\ x} \\
&\cup \ur{\Gamma'}{\Gamma'\ x_1} \cup \cdots \cup \ur{\Gamma'}{\Gamma'\ x_n}
\end{aligned}\\
&=
\begin{aligned}[t]
\ufv {e_l} &\cup \{x_1,\ldots,x_n\}\\
&\cup \textstyle\bigcup_{x\in \ufv {e_l}}\ur{\Gamma'}{\Gamma'\ x}\\
&\cup \ur{\Gamma'}{e_1} \cup \cdots \cup \ur{\Gamma'}{e_n}
\end{aligned}\\
&= 
\begin{aligned}[t]
\ufv {e_l} &\cup \{x_1,\ldots,x_n\} \\
&\cup \textstyle\bigcup_{x\in \ufv {e_l}}\ur{\Gamma'}{\Gamma'\ x}
\end{aligned}\\
&= \ur{\Gamma'} {e_l} \cup \{x_1,\ldots,x_n\}\\
&= \ur{\Gamma} {e_l} \cup \{x_1,\ldots,x_n\}.
\end{align*}
}
As all bound variables are distinct from variables in the heap, no $x_i\in U$. From $U\cap \ur\Gamma {e_l}
= \emptyset$, we have $U \cap \ur {\Gamma'} e = \emptyset$ and statements (a) and (b) follow from the induction hypothesis.

For $y\in \dom\Gamma$ the unreachable set of $y$ cannot contain any of $x_1,\ldots,x_n$, as the heap/term pair $\Gamma : e_l$ is distinctly named, so we have $\ur\Gamma{y} = \ur{\Gamma'}y$ and (c) follows from the induction hypothesis.

\case{$\sDup x$}
Clearly $\Gamma_0 \subseteq \Gamma, x\mapsto e \subseteq \Gamma, x\mapsto e, \fresh x \mapsto \hat e$. Also,
\begin{align*}
\ur{\Gamma,x\mapsto e, \fresh x\mapsto \hat e}{\fresh x}
&= \ur{\Gamma, x\mapsto e, \fresh x\mapsto \hat e}{\hat e} \cup \{\fresh x\}\\
&= \ur{\Gamma, x\mapsto e}{e} \cup \{\fresh x\}\\
&\subseteq \ur{\Gamma, x\mapsto e}{\sDup x} \cup \{\fresh x\}.
\end{align*}
As $x'$ is fresh, $\fresh x\notin U$ and from $U \cap \ur{\Gamma, x\mapsto e}{\sDup x}= \emptyset$ we have $U \cap \ur{\Gamma,x\mapsto e, \fresh x \mapsto \hat e}{\fresh x}=\emptyset$, so the first statement follows from the induction hypothesis.

Statement (c) follows immediately as $\fresh x$ is fresh.

\case{$\sDeepDup x$}
Let $\Gamma'$ denote the heap $\Gamma, x\mapsto e, \fresh x\mapsto \hat e[\fresh y_1/y_1,\ldots,\fresh y_n/y_n],\allowbreak \fresh y_1 \mapsto \sDeepDup y_1,\allowbreak \ldots,\allowbreak \fresh y_1 \mapsto \sDeepDup y_1$.
Recall that, by definition, $\ufv {\sDeepDup x}=\emptyset$, hence $\ur{\Gamma'}{\sDeepDup x} = \emptyset$. So
{\allowdisplaybreaks[1]
\begin{align*}
\ur{\Gamma'}{\fresh x}
&= \{\fresh x\} \cup \ur{\Gamma'}{\hat e[\fresh y_1/y_1,\ldots,\fresh y_n/y_n]} \\
&= \{\fresh x\}
\begin{aligned}[t]
&\cup \ufv{\hat e[\fresh y_1/y_1,\ldots,\fresh y_n/y_n]}\\
&\cup \textstyle\bigcup_{z \in \ufv{\hat e[\fresh y_1/y_1,\ldots,\fresh y_n/y_n]}} \ur{\Gamma'}{\Gamma' \ z}
\end{aligned}\\
&= \{\fresh x\}
\begin{aligned}[t]
&\cup \{\fresh y_1,\ldots, \fresh y_n\}\\
&\cup \textstyle\bigcup_{i=1,\ldots,n} \ur{\Gamma'}{\Gamma'\ \fresh y_i}
\end{aligned}\\
&= \{\fresh x\} 
\begin{aligned}[t]
&\cup \{\fresh y_1,\ldots, \fresh y_n\}\\
&\cup \textstyle\bigcup_{i=1,\ldots,n} \ur{\Gamma'}{\sDeepDup {y_i}}
\end{aligned}\\
&= \{\fresh x\} \cup \{\fresh y_1,\ldots, \fresh y_n\}
\end{align*}
}
and, as these are all fresh variables, $U \cap \ur{\Gamma'}{\fresh x} = \emptyset$. Clearly, $\Gamma_0 \subseteq (\Gamma, x\mapsto e) \subseteq \Gamma'$, so the  first statement follows from the induction hypothesis.

Statement (c) follows immediately as the additional variables are fresh.
\end{proof}

Having cast our intuition of \li-dup- and \li-deepDup- into a precise form using a formal semantics, we now explain how we have implemented this semantics, or rather a pragmatic approximation, in a real environment.

\section{The prototype implementation}
\label{sec:prototype}

Our implementation\footnote{Available at \url{http://darcs.nomeata.de/ghc-dup}} works with the Glasgow Haskell Compiler (GHC), version 7.4.1, and requires no modifications to the compiler or its runtime: The code is compiled to a usual object file, linked into the resulting binary and called via the foreign function interface.

GHC compiles Haskell code first to a polymorphic, explicitly typed lambda-calculus called \emph{Core} \citep{core,system-fc}, then to the \emph{Spineless Tagless G-machine} (STG) \citep{stg}. From there, it generates \emph{Cmm} code, an implementation of the portable assembly language C-{}- which is then compiled to machine code, either directly or via LLVM.

Our work looks at objects in the sense of the STG, so we only need to worry about data representation on the heap \citep{stg}. Design decisions regarding the earlier transformations, such as the evaluation model \cite{evalapply}, are thus not important here.

\def\ux{2.2cm}\def\uy{0.6cm}
\begin{figure}
\begin{center}
\begin{tikzpicture}[x=\ux, y=\uy,word/.style={shape=rectangle, draw, minimum width=\ux, minimum height=\uy},>=latex]
\draw (0,0) rectangle +(1,1) node[midway] (ip) {Info pointer};
\draw (1,0) rectangle +(1,1) node[midway] {Payload};
  (0,0) node[word] (ip) {Info pointer}
++(0,-1) node[word, minimum width=2*\ux] {Payload};

\begin{scope}[yshift=-0.7cm, xshift=2.5cm]
\draw
  (0,0) node[word] (tbl) {Code pointer}
++(0,-1) node[word] {Layout info}
++(0,-1) node[word] {Other fields};
\end{scope}
\draw[*->] (ip.south) |- (tbl.west);
\draw[*->] (tbl.east) -- ++(.5cm,0) node[right] {Entry code};
\end{tikzpicture}
\end{center}
\caption{The common layout of heap objects}
\label{fig:heap}
\end{figure}
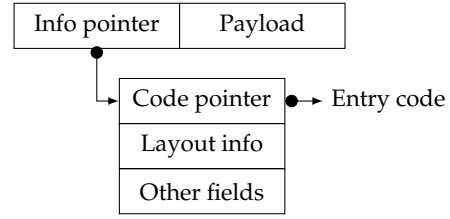

The common layout of all objects, or closures,  on the heap is a pointer to a statically allocated \emph{info table}, followed by the payload (Figure \ref{fig:heap}). The info table indicates the type of the object (not to be confused with the type from the type system – these are completely irrelevant at this stage), contains layout information about the payload required by the garbage collector, namely what words are pointers to other objects and what words are not, and the code to be run when the object is evaluated.

There are various types of objects on the heap, most important are:
\begin{itemize}
\item \emph{Data constructors}, representing fully evaluated values. The payload are pointers to the parameters of the constructor.
\item \emph{Function closures}, representing functions. Locally defined functions capture their free variables, these are stored in the payload.
\item \emph{Thunks}, which are unevaluated expressions. Again, the payload contains references to their free variables.
\item \emph{Applications} of a function to a number of arguments. This closure type is usually only used by the GHC interpreter, but we use it in the implementation of \li-deepDup-.
\item \emph{Indirections}, which point to another object on the heap in their payload. These are created during evaluation and removed by the garbage collector.
\end{itemize}

When a thunk is evaluated, it is replaced by an indirection which points to the result of the evaluation, which can be a data constructor or a function closure. This way, when another reference to the thunk is evaluated, the computation is not repeated but the calculated result is used directly, hence the result is \emph{shared}. The indirections do not stay around forever: The next garbage collector run, which copies all live data, will replace references to indirections by whatever the indirection points to.

As we want to avoid this sharing, we need to prevent the original reference to be replaced by the indirection. We cannot change the code of the thunk, but we can copy the thunk, thus creating a new copy that is not referenced by other code, and then evaluate that.
The essence of the surprisingly simple code is listed in Figure \ref{fig:dupcode}; the closure to duplicate is passed in the register \ci-R1- and \ci-Hp- is the heap pointer which is increased by \ci-ALLOC_PRIM-.

\begin{figure}
\begin{cmm}
dupClosure {
    clos = UNTAG(R1);
    // Allocate space for the new closure
    (len) = foreign "C" closure_sizeW(clos "ptr") [];
    ALLOC_PRIM(WDS(len), R1_PTR, dupClosure);
    copy = Hp - WDS(len) + WDS(1);
    p = 0;
    for: // Copy the info pointer and payload
    if(p < len) {
        W_[copy + WDS(p)] = W_[clos + WDS(p)];
        p = p + 1;
        goto for;
    }
    RET_P(copy);
}
\end{cmm}
\caption{The Cmm code for \li-dup-}
\label{fig:dupcode}
\end{figure}

As discussed in Section \ref{sec:deepdup}, this simple approach is not always sufficient, and we want a recursive variant, \li-deepDup-. This function, shown in Figure \ref{fig:deepdupcode}, needs to access the info table of the closure to figure out what part of the payload is a pointer to another heap object. For every referenced object, an application thunk is created which applies \li-deepDup- (or rather the variant \li!deepDupFun! with the better suited type \li!a -> a!), unless we are about to \li-deepDup- a \li-deepDup- thunk. In that case, we just copy it, but leave the argument alone, reflecting the use of $\ufv e$ instead of $\fv e$ in the Rule \sRule{Deep} in the formal semantics. The code listing does not include a few shortcuts, e.g.\ data constructors without pointer arguments such as integer values are not copied.

\begin{figure}
\begin{cmm}
deepDupClosure {
    clos = UNTAG(R1);
    // Allocate space for the new closure
    (len) = foreign "C" closure_sizeW(clos "ptr") [];
    ptrs  = TO_W_(
    bytes = WDS(len) + ptrs * SIZEOF_StgAP + WDS(ptrs);
    ALLOC_PRIM(bytes, R1_PTR, dupClosure);
    copy = Hp - WDS(len) + WDS(1);
    p = 0;
    for1: // Copy the info pointer and payload
    if(p < len) {
        W_[copy + WDS(p)] = W_[clos + WDS(p)];
        p = p + 1;
	goto for1;
    }
    // Do not wrap \textup{deepDup} thunks again
    if (W_[copy] == stg_ap_2_upd_info &&
        W_[copy + WDS(1)] == Dup_deepDupFun_closure) {
       goto done;
    }
    if 
    p = 0;
    for2: // Wrap all referenced closures in \textup{deepDup} thunks
    if(p < ptrs) {
	ap = Hp - bytes + WDS(1)
	     + p * SIZEOF_StgAP + WDS(p);
        W_[ap] = stg_ap_2_upd_info;
        W_[ap + WDS(1)] = Dup_deepDupFun_closure;
	W_[ap + WDS(2)] = W_[clos + WDS(p)];
	W_[copy + WDS(p)] = ap;
	p = p + 1;
	goto for2;
    }
    done:
    RET_P(copy);
}
\end{cmm}
\caption{The Cmm code for \li-deepDup-}
\label{fig:deepdupcode}
\end{figure}

\subsection{Limitations of the implementation}
\label{sec:shortcomings}

Our implementation is but a prototype; it does not yet work in all situations. One large problem is posed by statically allocated thunks: A value, say \li-nats = [0..]-, defined at the module level is compiled to a thunk with closure type \ci-THUNK_STATIC-, also called a constant applicative form (CAF), and receives special treatment by the garbage collector. Copying such a closure to the heap using the code above would make the garbage collector abort, as it does not expect a static thunk to be found on the heap. But it is not possible to change the type of the closure, as the info table containing the type lies directly next to the code. And in order to create a modified info table somewhere else, the code needs to be copied as well. Therefore, \li-dup- and \li-deepDup- currently does not work for static thunks. When it is passed such a thunk, it prints a warning and returns the original reference, retaining sharing.

It should be possible for \li-dup- to support static thunks with some additional information in the compiled code.  Currently, when execution enters a static thunk and the stack and heap checks have been passed, the thunk is replaced by an indirection into the heap and an update frame is pushed on the stack. If there was a way to jump over the code that sets up the indirection and update frame, e.g.\ via an alternative entry point included in the info table, \li-dup- could create a thunk on the heap that calls the static thunk via this route, effectively kicking off evaluation without affecting the original static thunk.  For \li-deepDup- things are more complicated, as references to static objects are not part of the heap object, but are scattered throughout the machine code. Moving these references to the heap would solve the issue here at hand, but is clearly too expensive.

Also, the prototype does not take multithreaded programs into account and will likely produce bad results when used in such an environment, e.g.\ when another thread replaces a thunk by an indirection during the thunk copy loop in \ci-dupClosure-. Similarly, there are several specialized closure type (arrays, mutable references, weak pointers\citep{weakpointers} and others \citep[page HeapObjects]{commentary}). For each of them, we would need to determine whether they can be safely duplicated and if so, whether this is actually useful.

In the presence of Lazy IO, duplicating thunks can be outright dangerous: Not only can the original and the duplicated thunk evaluate to different values but this can make the program crash, e.g.\ when one copy is done evaluating and causes a file to be closed, while the second copy continues to read from it. Generally everything implemented with \li-unsafePerformIO- is prone to behave badly when combined with \li-dup- or \li-deepDup-.

Function closures need special treatment as there are cases where code assumes a certain reference to always be a function closure and never a thunk that will evaluate to a function. But this is what \li-deepDup- wants to create. Currently, \li-deepDup- will in this case leave the reference as it is. A solution would be to copy the function closure eagerly, so that the reference in the copy again points to a function closure. This would require more sophisticated code to detect cycles. 

\section{Conclusions and further work}

While Haskell gives the programmer great devices to get their programs to do the right thing, such as referential transparency and the type system, she has less means to analyze and control their runtime behavior. Several commercial users have mentioned this as one of the main drawbacks of Haskell \citep{sampson,wehr,hesselink}. This problem deserves more attention and we hope that this work is one step towards a Haskell with better controllable and understandable time and space behavior.

We have shown the feasibility of an explicit sharing-preventing operator in a lazy functional language. We provided two variants, \li-dup- and \li-deepDup-, the former is simpler, but possibly more subtle to put to use effectively, the latter works more predictably, but may impose a larger performance penalty. This is, on a prototypical level, possible with an unmodified Haskell compiler.

As described in Section \ref{sec:shortcomings}, there is work to be done on the implementation before it can be used in production code. Some of that might require changes to the compiler code. Given how sensitive the code is to changes in the runtime representation of Haskell values, a productive version of \li-dup- would probably have to be shipped along with the compiler.


\acks

I would like to thank Andreas Lochbihler for fruitful discussions and proof-reading and the anonymous referees for being supportive of the idea and constructive about the presentation. This work was supported by the Deutsche Telekom Stiftung.

\bibliographystyle{abbrvnat}
\bibliography{bib}

\end{document}